\theoremstyle{definition}
\newtheorem{theorem}{Theorem}
\newtheorem{lemma}[theorem]{Lemma}
\newtheorem{definition}{Definition}
\newtheorem{proposition}{Proposition}
\def\BibTeX{{\rm B\kern-.05em{\sc i\kern-.025em b}\kern-.08em
    T\kern-.1667em\lower.7ex\hbox{E}\kern-.125emX}}
\begin{document}

\title{Minimizing Impurity Partition Under Constraints\\
}

	\author{\IEEEauthorblockN{Thuan Nguyen}
		\IEEEauthorblockA{School of Electrical and\\Computer Engineering\\
			Oregon State University\\
			Corvallis, OR, 97331\\
			Email: nguyeth9@oregonstate.edu}

		\and
		\IEEEauthorblockN{Thinh Nguyen}
		\IEEEauthorblockA{School of Electrical and\\Computer Engineering\\
			Oregon State University\\
			Corvallis, 97331 \\
			Email: thinhq@eecs.oregonstate.edu}
	}

\maketitle

\begin{abstract}
Set partitioning is a key component of many algorithms in machine learning, signal processing and communications. In general, the problem of finding a partition that minimizes a given impurity (loss function) is NP-hard.  As such, there exists a wealth of literature on approximate algorithms and theoretical analyses of the partitioning problem under different settings.  In this paper, we formulate and solve a variant of the partition problem called the minimum impurity partition under constraint (MIPUC).  MIPUC finds an optimal partition that minimizes a given loss function under a given concave constraint. MIPUC generalizes the recently proposed deterministic information bottleneck problem which finds an optimal partition that maximizes the mutual information between the input and partitioned output while minimizing the partitioned output entropy.  Our proposed algorithm is developed based on a novel optimality condition, which allows us to find a locally optimal solution efficiently. Moreover, we show that the optimal partition produces a hard partition that is equivalent to the cuts by hyper-planes in the probability space of the posterior probability that finally yields a polynomial time complexity algorithm to find the globally optimal partition. Both theoretical and numerical results are provided to validate the proposed algorithm. 
\end{abstract}

\begin{IEEEkeywords}
partition, optimization, impurity, concavity.
\end{IEEEkeywords}
\vspace{-0.05 in}
\section{Introduction}
Partitioning algorithms play a key role in machine learning, signal processing and communications.  Given a set $\mathbb{M}$ consisting of $M$ $N$-dimensional elements and a loss function over the subsets of  $\mathbb{M}$, a $K$-optimal partition algorithm splits $\mathbb{M}$ into $K$ subsets such that the total loss over all $K$ subsets is minimized. The loss function has also termed the impurity which measures the "impurity" of the set. Some of the popular impurity functions are the entropy function and the Gini index \cite{quinlan2014c4}. For example, when the empirical entropy of a set is large, this indicates a high level of non-homogeneity of the elements in the set, i.e., "impurity".  Thus, a $K$-optimal partition algorithm divides the original set into $K$ subsets such that the weighted sum of entropies in each subset is minimal.  

In general, the partitioning problem is NP-hard.  For small $M$, $N$, and $K$, the optimal partition can be found using an exhaustive search with time complexity $O(K^M)$.  In some special cases such as when $N = 2$, and a particular form of impurity functions is used, it is possible to determine the optimal partition in $O(M\log{M})$, independent of $K$ \cite{breiman2017classification}.  On the other hand, for large $M$, $N$,  and $K$, exhaustive  search is infeasible, and it is necessary to use approximate algorithms.  To that end, several heuristic algorithms are commonly used \cite{nadas1991iterative}, \cite{chou1991optimal}, \cite{coppersmith1999partitioning}, \cite{burshtein1992minimum} to find the optimal partition. These algorithms exploit the property of the impurity function to reduce the time complexity.  Specially, in   \cite{coppersmith1999partitioning}, \cite{burshtein1992minimum}, a class of impurity function called "frequency weighted concave impurity" is investigated. Both Gini index and entropy function belong to the frequency weighted concave impurity class.  Furthermore, assuming the concavity of the impurity function, Brushtein et al.  \cite{burshtein1992minimum} and Coppersmith et al. \cite{coppersmith1999partitioning} showed that the optimal partition can be separated by a hyper-plane in the probability space. Consequently, they proposed approximate algorithms to find the optimal partition. Recently, in \cite{laber2018binary}, an approximate algorithm is proposed for a binary partition ($K=2$) that guarantees the true impurity is within a constant factor of the approximation.
From a communication/coding theory perspective, the problem of finding an optimal quantizer that maximizes the mutual information between the input and the quantized output is an important instance of the partition problem.  In particular, algorithms for constructing polar codes \cite{tal2013construct} and for decoding LDPC codes \cite{romero2015decoding} made use of the quantizers. Consequently, there has been recent works on designing quantizers for maximizing mutual information \cite{kurkoski2014quantization}, \cite{nguyen2018capacities}.  

In this paper, we extended the problem of minimizing impurity partition under the constraints of the output variable. It is worth noting that many of  problem in the real scenario is the optimization under constraints, therefore, our extension problem is interesting and applicable. For example, our setting generalizes the recently proposed deterministic information bottleneck \cite{strouse2017deterministic} that finds the optimal partition to maximize the mutual information between input and quantized output while keeps the output entropy is as small as possible. It is worth noting that Strouse et al. used a technique which is similar to  the information bottleneck method \cite{tishby2000information} and is hard to extend to other impurity and constraint functions. On the other hand, our proposed method  is developed based on a novel optimality condition, which allows us to find a locally optimal solution efficiently for an arbitrary frequency weighted concave impurity functions under arbitrary concave constraints. Moreover, we show that the optimal partition produces a hard partition that is equivalent to the cuts by hyper-planes in the probability space of the posterior probability that finally yields a polynomial time complexity algorithm to find the globally optimal partition.


\section{Problem Formulation}

	\begin{figure}
		\centering
		\includegraphics[width=1.6 in]{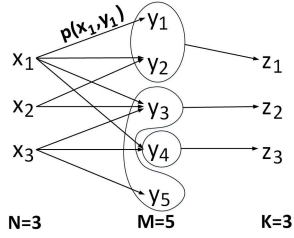}\\
		\caption{$Q(Y) \rightarrow Z$ for a given joint distribution $p_{(X,Y)}$.}\label{fig: 1}
	\end{figure}	

Consider an original discrete data $X_i \in X=\{X_1,X_2,\dots,X_N\}$ with distribution $p_X=[p_1,p_2,\dots,p_N]$ is given. Due to the affection of noise, one only can view a noisy version of data $X$ named $Y=\{Y_1,\dots,Y_M\}$ with the joint probability distribution $p_{(X_i,Y_j)}$ is given $\forall$ $i =1,2,\dots,N$ and $j=1,2,\dots,M$. It is easily to compute the distribution of $Y$, i.e., $p_Y=[q_1,q_2,\dots,q_M]$. Therefore, each sample $Y_i$ is specified by a joint probability distribution vector $p_{(X,Y_i)}=[p_{(X_1,Y_i)},p_{(X_2,Y_i)}, \dots, p_{(X_N,Y_i)}]$ which involves two parameters (i) the probability weight $q_i$ and (ii) a conditional probability vector tuple  $p_{X|Y_i}=[p_{X_1|Y_i},p_{X_2|Y_i},\dots,p_{X_N|Y_i}]$. 
From the discrete data $Y$, the partitioned output $ Z=\{Z_1,\dots,Z_K\}$ with the distribution $p_Z=[v_1,v_2,\dots,v_K ]$ is obtained by applying an quantizer (possible stochastic) $Q$ which assigns $Y_j \in Y$ to a partitioned output subset $Z_i \in Z$ by a probability $p_{Z_i|Y_j}$ where $0 \leq p_{Z_i|Y_j} \leq 1$.
\begin{equation}
Q( Y) \rightarrow Z.
\end{equation}

Fig. \ref{fig: 1} illustrates our setting.  Our goal is finding an optimal quantizer (partition) $Q^*$ such that the impurity  function  $F(X,Z)$ between original data $X$ and partitioned output $Z$ is minimized while the partitioned output probability distribution $p_Z=[v_1,v_2,\dots,v_K]$ satisfies a constraint $C(p_Z) \leq D$. 

\subsection{Impurity measurement}  
The impurity $F(X,Z)$ between $X$ and $Z$ is defined by adding up the impurity in each output subset $Z_i \in Z$ i.e., $F(X,Z) = \sum_{i=1}^{K} F(X,Z_i)$ 
where
\begin{eqnarray}
\label{eq: definition of impurity}
F(X,Z_i) &\!=\!&  p_{Z_i} f[p_{X|Z_i}] \nonumber\\
&\!=\!&  v_i f[p_{X_1|Z_i},\! p_{X_2|Z_i},\! \dots,\! p_{X_N|Z_i}]
\label{eq: definition of impurity}
\end{eqnarray}
is the impurity function in $Z_i$, $p_{X|Z_i}=[p_{X_1|Z_i},\! p_{X_2|Z_i},\! \dots,\! p_{X_N|Z_i}]$ denotes the conditional distribution $p_{X|Z_i}$. The loss function $f(.)$ is a concave function which is defined as following.

\begin{definition}
\label{def: 1}
A concave loss function $f(.)$ is a real function in $\mathbf{R^N}$ such that: 

(i) For all probability vector $a=[a_1,a_2,\dots,a_N]$ and $b=[b_1,b_2,\dots,b_N]$ 
\begin{equation}
\label{eq: concave function}
f(\lambda a + (1-\lambda)b) \geq \lambda f(a) + (1-\lambda)f(b), \forall \lambda \in (0,1)
\end{equation}
with equality if and only if $a=b$.

(ii) $f(a)=0$ if $a_i=1$ for some $i$. 
\end{definition}

We note that the above definition of impurity function was proposed in \cite{chou1991optimal}, \cite{coppersmith1999partitioning}, \cite{burshtein1992minimum}. Many of interesting impurity functions such as Entropy and Gini index  \cite{chou1991optimal}, \cite{coppersmith1999partitioning}, \cite{burshtein1992minimum} satisfy the Definition \ref{def: 1}. \\
%

\textbf{Reformulation of the impurity function:} We will show that the impurity function $F(X,Z_i)$ can be rewritten as the function of only the joint distribution variable $p_{(X,Z_i)}=[p_{(X_1,Z_i)}, p_{(X_2,Z_i)}, \dots, p_{(X_N,Z_i)}]$. Therefore, one can denote $F(X,Z_i)$ as $F(p_{(X,Z_i)})$. Indeed, define $$p_{(X_j,Z_i)}=\sum_{Y_k \in Y}^{}p_{(X_j,Y_k)} p_{Z_i|Y_k}. $$
Now,  the impurity function can be rewritten by:
\begin{eqnarray}
 F(X,Z_i) &\!=\!&  \sum_{j\!=\!1}^{N}p_{(X_j\!,\!Z_i)}f[\dfrac{p_{(X_1\!,\!Z_i)}}{\sum_{j\!=\!1}^{N}p_{(X_j\!,\!Z_i)}}, \dots, \dfrac{p_{(X_N\!,\!Z_i)}}{\sum_{j\!=\!1}^{N}p_{(X_j\!,\!Z_i)}}] \nonumber\\
\label{eq: new formulation}
\end{eqnarray}
where $\sum_{j=1}^{N}p_{(X_j,Z_i)}=v_i$ denotes the weight of $Z_i$ and $\dfrac{p_{(X_k,Z_i)}}{\sum_{j=1}^{N}p_{(X_j,Z_i)}}$ denotes the conditional distribution $p_{(X_k|Z_i)}$.  
\textit{The impurity function $F(X,Z_i)$, therefore, is a function of $p_{(X_j,Z_i)}$ variables. In the rest of this paper, we will denote $F(X,Z_i)$ by $F(p_{(X,Z_i)})$ and $F(X,Z)$ by $F(p_{(X,Z)})$.} 


\subsection{Partitioned output constraint}
Now, we formulate a new problem such that the impurity function is minimized  while the partitioned output distribution  $p_Z=[v_1,v_2,\dots,v_K]$ satisfies a constraint. 
\begin{equation*}
C(p_Z)=g(v_1) + g(v_2) + \dots + g(v_K) \leq D
\end{equation*}
where $g(.)$ is a concave function. For example, 

\begin{itemize}
\item{} Entropy function:
\begin{equation*}
H[p_Z]=H[v_1,v_2,\dots,v_K]=-\sum_{i=1}^{n}v_i \log(v_i).
\end{equation*}
For example, if we want to compress data $Y$ to $Z$ and then transmit $Z$ as the intermediate representation of $Y$ over a low bandwidth channel to the next destination, the entropy of $p_Z$ which is controlled the maximum compression rate, is important. A lower of  $H(p_Z)$, a smaller of channel capacity is required \cite{strouse2017deterministic}.  
\item{} Linear function:
Similar to previous example, to transmit $Z$ over a channel, each value in the same subset $Z_1,Z_2,\dots,Z_K$ is coded to a pulse, i.e., $Z_1 \rightarrow 0$, $Z_2 \rightarrow 1$, $Z_3 \rightarrow 2$ which have a difference cost of transmission i.e., power consumption or time delay. The cost of transmission now is
\begin{equation*}
T[p_Z]=T[v_1,v_2,\dots,v_K]=\sum_{i=1}^{K}t_iv_i.
\end{equation*}
where $t_i$ is a constant relate to power consumption or time delay. An example of transmission cost can be viewed in  \cite{verdu1990channel}.
\end{itemize}
 
\subsection{Problem Formulation}

Now, our problem can be formulated as finding an optimal quantizer  $Q^*$ such that the impurity function  $F(X,Z)$ is minimized while the partitioned output probability distribution $p_Z$ satisfies a constraint $C(p_Z) \leq D$. 
%
Since both $F(X,Z)$ and $C(p_Z) $ depend on the quantizer design, we are interested in solving the following optimization problem
	\begin{equation}
	\label{eq: main problem}
	Q^*=\min_{Q}[\beta F(X,Z) + C(p_Z) ],
	\end{equation}
	where $\beta$ is pre-specified parameter to control a given trade-off between minimizing $F(X,Z)$ or $C(p_Z)$. 

\textbf{Relate to Deterministic Information Bottleneck (DIB) method:} we also note that our optimization problem in (\ref{eq: main problem}) covers the proposed problem called Deterministic Information Bottleneck Method \cite{strouse2017deterministic} which solved the following problem
	\begin{equation}
	\label{eq:determinisitc information bottleneck problem}
	Q^*=\min_{Q}[H(Z)- \beta I(X;Z) ],
	\end{equation}
where $H(Z)$ is the entropy of output $Z$ and $I(X;Z)$ is the mutual information between original data $X$ and  quantized output $Z$. Minimizing $H(Z)$ is equivalent to minimizing $C(p_Z)$. Moreover,
\begin{equation*}
I(X;Z)=H(X)-H(X|Z).
\end{equation*}
Thus, minimizing $-I(X;Z)$ is equivalent to minimizing $H(X|Z)$ due to $p_X$ is given.  That said Deterministic Information Bottleneck \cite{strouse2017deterministic}  is a special case of our problem where both $f(.)$ and $g(.)$ are entropy functions.

\section{Solution approach}
\subsection{Optimality condition}

We first begin with some properties of the impurity function. For convenience, we recall that $F(p_{(X,Z_i)})$ denotes the impurity function in output subset $Z_i$ and $p_{X|Z_i}=[\dfrac{p_{(X_1\!,\!Z_i)}}{\sum_{j=1}^{N}p_{(X_j,Z_i)}}, \dots, \dfrac{p_{(X_N,Z_i)}}{\sum_{j=1}^{N}p_{(X_j,Z_i)}}]$.
\begin{proposition}
\label{prop: 2}
The impurity function $F(p_{(X,Z_i)})$ in partitioned output $Z_i$ has the following properties:

(i) \textbf{proportional increasing/ decreasing to its weight:} if  $p_{(X,Z_i)}=\lambda p_{(X,Z_j)}$, then
\begin{equation}
\dfrac{F(p_{(X,Z_i)})}{F(p_{(X,Z_j)})}=\lambda.
\end{equation}

(ii)\textbf{ impurity gain after partition is always non-negative:} If $p_{(X,Z_i)}=p_{(X,Z_j)}+p_{(X,Z_k)}$, then

\begin{equation}
\label{eq: concave of partition}
F(p_{(X,Z_i)}) \geq F(p_{(X,Z_j)}) + F(p_{(X,Z_k)}).
\end{equation}

\end{proposition}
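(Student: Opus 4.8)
The plan is to exploit the fact that $F(p_{(X,Z_i)})$, as rewritten in (\ref{eq: new formulation}), is precisely the perspective (degree-one homogeneous extension) of the concave loss $f$. Both claims then follow by tracking the scalar weight $v_i = \sum_{j=1}^{N} p_{(X_j,Z_i)}$ separately from the normalized conditional vector $p_{X|Z_i}$, and invoking homogeneity for part (i) and concavity for part (ii).

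For part (i), I would set $p_{(X,Z_j)} = [a_1,\dots,a_N]$ with weight $w = \sum_{l} a_l$, so that $p_{(X,Z_i)} = \lambda p_{(X,Z_j)}$ has weight $\lambda w$. The key observation is that the conditional vector is scale-invariant, since $\tfrac{\lambda a_l}{\lambda w} = \tfrac{a_l}{w}$, so $f$ is evaluated at the identical point for both $Z_i$ and $Z_j$. Hence $F(p_{(X,Z_i)}) = \lambda w\, f[a_1/w,\dots,a_N/w] = \lambda\, F(p_{(X,Z_j)})$, and the ratio is $\lambda$ as claimed.

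For part (ii), I would write $p_{(X,Z_j)} = [a_1,\dots,a_N]$ and $p_{(X,Z_k)} = [b_1,\dots,b_N]$ with weights $w_j$ and $w_k$, so that $p_{(X,Z_i)}$ has weight $w_i = w_j + w_k$. The central step is to recognize the combined conditional vector as a genuine convex combination: setting $\lambda = w_j/w_i$, each coordinate satisfies
\[
\frac{a_l+b_l}{w_i} = \lambda\,\frac{a_l}{w_j} + (1-\lambda)\,\frac{b_l}{w_k},
\]
so that $p_{X|Z_i} = \lambda\, p_{X|Z_j} + (1-\lambda)\, p_{X|Z_k}$. Applying the concavity inequality (\ref{eq: concave function}) to $f$ at this point and multiplying through by $w_i$ yields
\[
F(p_{(X,Z_i)}) = w_i\, f\big[\lambda\, p_{X|Z_j} + (1-\lambda)\, p_{X|Z_k}\big] \geq w_j\, f[p_{X|Z_j}] + w_k\, f[p_{X|Z_k}],
\]
where the prefactor $w_i$ redistributes exactly because $w_i\lambda = w_j$ and $w_i(1-\lambda) = w_k$. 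The right-hand side is precisely $F(p_{(X,Z_j)}) + F(p_{(X,Z_k)})$.

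The argument is essentially mechanical once the perspective-function structure is identified; the only place demanding care is the bookkeeping in part (ii) — verifying that the convex-combination weights $\lambda$ and $1-\lambda$ are exactly those that let the factor $w_i$ recombine into $w_j$ and $w_k$ after concavity is invoked. I expect no genuine obstacle beyond this algebraic accounting, and I note that the strict equality condition in Definition \ref{def: 1} could additionally be used to characterize when (\ref{eq: concave of partition}) holds with equality, namely when $p_{X|Z_j} = p_{X|Z_k}$.
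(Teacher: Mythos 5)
Your proof is correct and follows essentially the same route as the paper's: part (i) uses scale-invariance of the conditional vector together with the weight factor, and part (ii) divides by the total weight $p_{Z_i}$ to express $p_{X|Z_i}$ as the convex combination $\lambda\, p_{X|Z_j} + (1-\lambda)\, p_{X|Z_k}$ with $\lambda = p_{Z_j}/p_{Z_i}$ and then applies the concavity inequality (\ref{eq: concave function}), exactly as the paper does. Your observation about the equality case is a small bonus the paper does not state, but the core argument is the same.
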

\begin{proof}

(i) From $p_{(X,Z_i)}=\lambda p_{(X,Z_j)}$, then  $p_{X|Z_i}=p_{X|Z_j}$ and $p_{Z_i}=\lambda p_{Z_j}$. Thus, using the definition of $F(p_{(X,Z_i)})$ in (\ref{eq: definition of impurity}), it is obviously to prove the first property. 

(ii) By dividing both side of  $p_{(X,Z_i)}=p_{(X,Z_j)}+p_{(X,Z_k)}$ to $p_{Z_i}$, we have
\begin{equation}
\label{eq: 9}
p_{X|Z_i}= \dfrac{p_{Z_j}}{p_{Z_i}}p_{X|Z_j}+ \dfrac{p_{Z_k}}{p_{Z_i}}p_{X|Z_k}.
\end{equation}
Now, using the definition of $F(X,Z_i)$ in (\ref{eq: definition of impurity}),
\begin{eqnarray}
F(\!p_{(X,Z_i)}\!) &\!=\!& p_{Z_i}f(p_{X|Z_i}) \nonumber\\
&\!=\!& p_{Z_i} f [\dfrac{p_{Z_j}}{p_{Z_i}}p_{X|Z_j} \!+\! \dfrac{p_{Z_k}}{p_{Z_i}}p_{X|Z_k}] \label{eq: 10}\\
&\!\geq \!& p_{\!Z_i\!} [\dfrac{p_{Z_j}}{p_{Z_i}} f(\!p_{X|Z_j}\!) \!+\! \dfrac{p_{Z_k}}{p_{Z_i}}f(\!p_{X|Z_k}\!)] \label{eq: 11}\\
&\!=\!&p_{Z_j}f(p_{X|Z_j}) \!+\! p_{Z_k}f(p_{X|Z_k}) \nonumber\\
&\!=\!& F(p_{(X,Z_j)}) + F(p_{(X,Z_k)}) \nonumber
\end{eqnarray}
with (\ref{eq: 10}) is due to (\ref{eq: 9}) and (\ref{eq: 11}) due to concave property of $f(.)$ which is defined in (\ref{eq: concave function}) using $\lambda= \dfrac{p_{Z_j}}{p_{Z_i}}$, $1-\lambda=\dfrac{p_{Z_k}}{p_{Z_i}}$. 
\end{proof}

Now, we are ready to prove the main result which characterizes the condition for an optimal partition $Q^*$. 
\begin{theorem}
\label{theorem: 1}
Suppose that  an optimal partition $Q^*$ yields the optimal partitioned output $Z=\{Z_1,Z_2,\dots,Z_K \}$.
 For each optimal subset $Z_l$, $l \in \{1,2,\dots,K\}$, we define vector $c_l=[c_l^1,c_l^2,\dots,c_l^N]$ where
\begin{equation}
\label{eq: 16}
c_l^i= \frac{\partial F(p_{(X,Z_l)})}{\partial p_{(X_i,Z_l)}}, \forall i \in \{1,2,\dots,N\}. 
\end{equation}
We also define 
\begin{equation}
\label{eq: 17}
d_l=\frac{\partial g(v_l)}{\partial v_l}. 
\end{equation}
Define the "distance" from $Y_i \in Y$ to $Z_l$ is 
\begin{eqnarray}
D(Y_i,Z_l) &=&\beta \sum_{q=1}^{N}[p_{(X_q,Y_i)} c_l^q] + d_l q_i \nonumber\\
			   &=& q_i (\beta \sum_{q=1}^{N}[p_{X_q|Y_i} c_l^q] + d_l) \label{eq: optimality condition}.
\end{eqnarray}
Then, data $Y_i$ with probability $q_i$  is quantized to $Z_l$ if and only if $D(Y_i,Z_l) \leq D(Y_i,Z_s)$ for $\forall s \in \{1,2,\dots,K\}$, $ s \neq l$. 
\end{theorem}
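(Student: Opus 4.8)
The plan is to recast the problem as a smooth optimization over the \emph{assignment probabilities} $a_{il} := p_{Z_l\mid Y_i}$, which live in the product of probability simplices $\{a_{il}\ge 0,\ \sum_{l} a_{il}=1\ \forall i\}$, and to read off the optimality condition from a first-order (directional-derivative) analysis. First I would note that both $p_{(X_q,Z_l)}=\sum_i p_{(X_q,Y_i)}\,a_{il}$ and $v_l=\sum_i q_i\,a_{il}$ are \emph{linear} in the variables $a_{il}$, so the objective $J(a)=\beta\sum_l F(p_{(X,Z_l)})+\sum_l g(v_l)$ is a differentiable function of $a$. A direct chain-rule computation then gives the key identity
\begin{equation}
\frac{\partial J}{\partial a_{il}}=\beta\sum_{q=1}^{N} c_l^q\,p_{(X_q,Y_i)}+d_l\,q_i=D(Y_i,Z_l),
\end{equation}
so that the ``distance'' $D(Y_i,Z_l)$ is exactly the marginal cost of routing an increment of $Y_i$'s mass into $Z_l$, with $c_l,d_l$ evaluated at the optimal partition as in \eqref{eq: 16}--\eqref{eq: 17}.

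Next I would establish that the optimum is a \emph{hard} partition, which is what makes the assignment of each $Y_i$ well defined. Since $F$ is the perspective of the concave loss $f$ (equivalently, $F$ is positively homogeneous of degree one and superadditive by Proposition \ref{prop: 2}), it is concave in $p_{(X,Z_l)}$; composing with the linear map $a\mapsto p_{(X,Z_l)}$ and adding the concave terms $g(v_l)$ shows that $J$ is concave in $a$. Minimizing a concave function over the product of simplices attains its minimum at a vertex, and the vertices are precisely the deterministic quantizers; hence we may take $Q^*$ to assign each $Y_i$ entirely to one subset.

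For the necessity (``only if'') direction, suppose the optimal $Q^*$ assigns $Y_i$ to $Z_l$ and fix any $s\ne l$. I would consider the feasible edge direction that transfers a fraction $\epsilon$ of $Y_i$ from $Z_l$ to $Z_s$, i.e.\ $a_{il}=1-\epsilon,\ a_{is}=\epsilon$. Restricting $J$ to this segment gives a concave function of $\epsilon\in[0,1]$ whose global minimum lies at $\epsilon=0$; concavity then forces the right-derivative at $0$ to be nonnegative. By the identity above this right-derivative equals $D(Y_i,Z_s)-D(Y_i,Z_l)$, yielding $D(Y_i,Z_l)\le D(Y_i,Z_s)$ for every $s$. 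The converse (``if'') direction follows by contraposition together with the hard-partition property: each $Y_i$ sits in a unique $Z_m$ which, by necessity, minimizes $D(Y_i,\cdot)$; if $Z_l$ also minimizes the distance then $l=m$ (ties being resolved by any fixed rule), so $Y_i$ is quantized to $Z_l$.

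The main obstacle I anticipate is the justification of the right-derivative step at the boundary of the simplex: differentiating $F$ requires that the gradients $c_l^q$ and $d_l$ exist at the optimal point, which can fail when some conditional component $p_{X_q|Z_l}$ or weight $v_l$ vanishes (e.g.\ the logarithmic gradient of the entropy blows up). I would handle this by either assuming the standard interiority/regularity of the optimal partition or by replacing the two-sided derivative with one-sided directional derivatives, which are all that the concavity argument actually needs. The remaining bookkeeping---verifying the chain rule and the concavity of the perspective function---is routine.
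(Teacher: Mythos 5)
Your proposal is correct and rests on the same two pillars as the paper's own proof: the chain-rule identity that the marginal cost of routing mass of $Y_i$ into $Z_l$ equals $D(Y_i,Z_l)$, and concavity of the objective along the mass-transfer direction. The packaging, however, is genuinely different. The paper never sets up a global program over assignment variables: it fixes an optimal (possibly stochastic) quantizer, moves an amount $tbp_{(X,Y_i)}$ from $Z_l$ to $Z_s$, computes $\partial I(t)/\partial t|_{t=0}=b\,(D(Y_i,Z_s)-D(Y_i,Z_l))$, and argues by contradiction: if this derivative were negative, the hand-proved difference-quotient inequality (\ref{eq: gradient of I(t)}) of Proposition \ref{prop: 1} (itself derived from the homogeneity and superadditivity in Proposition \ref{prop: 2}) forces $I(1)<I(0)$, contradicting optimality; hardness of the optimal partition is then extracted afterwards as Lemma \ref{lemma: 1}, since $D(Y_i,Z_l)$ does not depend on the allocation probability $b$. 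You invert this order: you first obtain hardness from the standard fact that a concave function over a polytope attains its minimum at a vertex (using that $F$, as a perspective of the concave $f$, is concave---the same content as Proposition \ref{prop: 2}), and then read off the inequality as a first-order optimality condition along an edge, which is exactly the contrapositive of the paper's contradiction step. Your route buys transparency: concavity of $J$ is established once in standard convex-analysis form, Proposition \ref{prop: 1} becomes the trivial remark that a concave function restricted to a segment is concave, and Lemma \ref{lemma: 1} comes for free. The paper's route buys scope: because it perturbs an arbitrary soft assignment ($0<b\le 1$), its necessity conclusion applies verbatim to every optimal quantizer, stochastic or not, whereas your ``we may take $Q^*$ to be a vertex'' step narrows the claim to vertex optima---a real caveat since $F$ is positively homogeneous, hence not strictly concave, and non-vertex optima are not excluded a priori (your edge argument does extend to soft optima with $a_{il}=b>0$, but you should state this explicitly). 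Finally, the two loose ends you share with the paper: the ``if'' direction is ambiguous under ties and neither treatment resolves it beyond a tie-breaking convention, and the differentiability issue you flag (gradients of $f$ or $g$ blowing up when $v_l$ or a component of $p_{(X,Z_l)}$ vanishes) is silently ignored by the paper, so your one-sided-derivative fix is a strict improvement rather than a gap.
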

\begin{proof}

	\begin{figure}
		\centering
		\includegraphics[width=2.7 in]{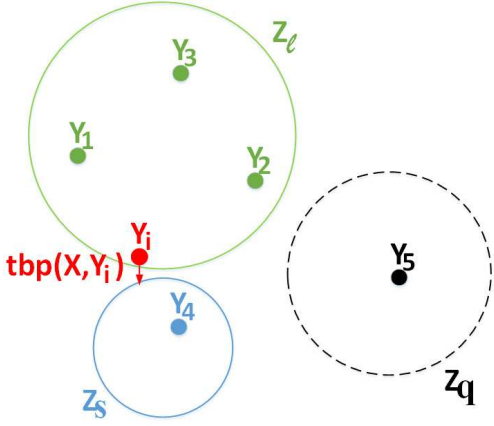}\\
		\caption{"Soft" partition of $Y_i$ between $Z_l$ and $Z_s$ by changing amount of $tbp_{(X,Y_i)}$.}\label{fig: 2}
	\end{figure}	

Now, consider two arbitrary partitioned outputs $Z_l$ and $Z_s$ and a trial data $Y_i$. For a given optimal quantizer $Q^*$, we suppose that $Y_i$ is allocated to $Z_l$ with the probability of $p_{Z_l|Y_i}=b$, $0 < b \leq 1$. We remind that $p_{(X,Y_i)}=[p_{(X_1,Y_i)},p_{(X_2,Y_i)}, \dots, p_{(X_N,Y_i)}]$ denotes the joint distribution of sample $Y_i$. We consider the change of the impurity function $F(p_{(X,Z)})$ and the constraint  $C(p_Z)$ as a function of $t$ by changing amount $tbp_{(X,Y_i)}$ from $p_{(X,Z_l)}$ to $p_{(X,Z_s)}$ where $t$ is a scalar and $0 \leq t \leq 1$. 
\begin{small}
\begin{eqnarray}
F(p_{(X\!,\!Z)})(t) &\!=\!&  \sum_{q=1, q \neq l,s}^{K}F(p_{(X,Z_q)}) \nonumber \\
&\!+\!&  F(p_{(X,Z_s)} \!+\! tbp_{(X,Y_i)}) \!+\! F(p_{(X,Z_l)} \!-\! tbp_{(X,Y_i)}),\nonumber \\
\label{eq: 18}
\end{eqnarray}
\begin{eqnarray}
C(p_Z)(t) &=& \sum_{q=1,q \neq l,s}^{K}g(p_{Z_q}) + g(p_{Z_l}-tbq_i) + g(p_{Z_s} + tbq_i), \nonumber \\
\label{eq: 19}
\end{eqnarray}
\end{small}
where $p_{(X,Z_s)} + tbp_{(X,Y_i)}$ and $p_{(X,Z_l)} - tbp_{(X,Y_i)}$ denotes the new joint distributions in $Z_s$  and $Z_l$ by changing amount of $tbp_{(X,Y_i)}$ from $Z_l$ to $Z_s$. 
Fig. \ref{fig: 2} illustrates our setting. From (\ref{eq: 18}) and (\ref{eq: 19}), the total instantaneous change of $\beta F(p_{(X,Z)}) + C(p_Z)$  by changing amount of $tbp_{(X,Y_i)}$ is
\begin{eqnarray}
\label{eq: I(t)}
I(t) &=& \beta [F(p_{(X,Z_s)} \!+\! tbp_{(X,Y_i)}) \!+\! F(p_{(X,Z_l)} \!-\! tbp_{(X,Y_i)})] \nonumber\\
&+&  g(p_{Z_s} + tbq_i) + g(p_{Z_l}-tbq_i).
\end{eqnarray}
However,
\begin{small}
 \begin{equation}
 \label{eq: derivative 1}
 \frac{\partial F(p_{(X,Z_l)} \!-\! tbp_{(X,Y_i)})}{\partial t}|_{t\!=\!0} \!=\! \!-\! b\sum_{q=1}^{N}[c_l^q p_{(X_q,Y_i)}] \!=\!-\! q_i b \sum_{q=1}^{N}[c_l^q p_{X_q|Y_i}].
 \end{equation}
 \begin{equation}
  \label{eq: derivative 2}
   \frac{\partial F(p_{(X,Z_s)} \!+\! tbp_{(X,Y_i)})}{\partial t}|_{t\!=\!0}=b\sum_{q=1}^{N}[c_s^q p_{(X_q,Y_i)}]=q_i b \sum_{q=1}^{N}[c_s^q p_{X_q|Y_i}].
 \end{equation}
\end{small}
Similarly,
 \begin{equation}
  \label{eq: derivative 3}
   \frac{\partial g(p_{Z_s} + tbq_i) + g(p_{Z_l}-tbq_i) }{\partial t}|_{t=0}=b(d_sq_i-d_lq_i).
 \end{equation}
From (\ref{eq: 16}), (\ref{eq: 17}), (\ref{eq: derivative 1}), (\ref{eq: derivative 2})  and (\ref{eq: derivative 3}), we have
\begin{eqnarray}
\frac{\partial I(t)}{\partial t}|_{t=0} &=& b \beta \sum_{q=1}^{N}[c_s^q p_{(X_q,Y_i)}] + b d_sq_i \nonumber  \\ 
&-& b \beta \sum_{q=1}^{N}[c_l^q p_{(X_q,Y_i)}] - b d_lq_i \nonumber  \\
&=& b q_i [\beta \sum_{q=1}^{N}c_s^q p_{X_q|Y_i}+d_s]  \nonumber \\
&-& b q_i [\beta \sum_{q=1}^{N}c_l^q p_{X_q|Y_i} +d_l] \nonumber\\
&=&b (D(Y_i,Z_s)-D(Y_i,Z_l)). \nonumber
\end{eqnarray}
Now, using contradiction method, suppose that $D(Z_l,Y_i) > D(Z_s,Y_i)$. Thus,
\begin{equation}
\label{eq: 20}
\frac{\partial I(t)}{\partial t} |_{t=0} < 0.
\end{equation}
\begin{proposition}
\label{prop: 1}
Consider $I(t)$ which is defined in (\ref{eq: I(t)}). For $0 < t < a < 1$, we have:
\begin{equation}
\label{eq: gradient of I(t)}
\dfrac{I(t)-I(0)}{t} \geq \dfrac{I(a)-I(0)}{a}.
\end{equation}
\end{proposition}
\begin{proof}
From Proposition \ref{prop: 2},
\begin{footnotesize}
\begin{eqnarray}
F(p_{(X,Z_s)} \!+\! tbp_{(X,Y_i)}) &\!\geq\!& F((1\!-\!\dfrac{t}{a}) p_{(X,Z_s)}) \!+\! F(\dfrac{t}{a}(p_{(X,Z_s)} \!+\! abp_{(X,Y_i)})) \nonumber\\
&\!=\!& (1\!-\!\dfrac{t}{a}) F(p_{(X,Z_s)}) \!+\! \dfrac{t}{a}F(p_{(X,Z_s)} \!+\! abp_{(X,Y_i)}).\nonumber \\\label{eq: 21}
\end{eqnarray}
\begin{eqnarray}
F(p_{(X,Z_l)} \!-\! tbp_{(X,Y_i)}) &\!\geq\!& F((1-\dfrac{t}{a})p_{(X,Z_l)})+F(\dfrac{t}{a}(p_{(X,Z_l)} \!-\! abp_{(X,Y_i)})) \nonumber\\
&\!=\!& (1\!-\!\dfrac{t}{a}) F(p_{(X,Z_l)}) \!+\! \dfrac{t}{a}F(p_{(X,Z_l)} \!-\! abp_{(X,Y_i)}),\nonumber \\ \label{eq: 22}
\end{eqnarray}
\end{footnotesize}
where the inequality due to (ii) and the equality due to (i) in Proposition \ref{prop: 1}, respectively. Similar, since $g(.)$ is a  concave function,
\begin{eqnarray}
g(p_{Z_s}\!+\!tbq_i) &\!=\!& g((1-\dfrac{t}{a})p_{Z_s}  \!+\! \dfrac{t}{a} (p_{Z_s}+abq_i) ) \nonumber\\
&\!\geq\! & (1-\dfrac{t}{a})g(p_{Z_s}) \!+\! \dfrac{t}{a} g(p_{Z_s}+abq_i), \nonumber \\ \label{eq: 24}
\end{eqnarray}
\begin{eqnarray}
g(p_{Z_l}\!-\!tbq_i) &\!=\!& g((1-\dfrac{t}{a})p_{Z_l}  \!+\! \dfrac{t}{a} (p_{Z_l}-abq_i) ) \nonumber\\
&\!\geq\!& (1-\dfrac{t}{a})g(p_{Z_l}) \!+\! \dfrac{t}{a} g(p_{Z_l}-abq_i). \nonumber \\ \label{eq: 25}
\end{eqnarray}
Thus, adding up (\ref{eq: 21}), (\ref{eq: 22}), (\ref{eq: 24}), (\ref{eq: 25}) and using a little bit of algebra, one can show that 
\begin{equation}
I(t) \geq (1-\dfrac{t}{a})I(0) + \dfrac{t}{a}I(a)
\end{equation}
which is equivalent to (\ref{eq: gradient of I(t)}). 
\end{proof}
Now, we continue to the proof of Theorem \ref{theorem: 1}. From Proposition \ref{prop: 1} and the assumption in (\ref{eq: 20}), we have:
\begin{equation*}
 0 > \frac{\partial I(t)}{\partial t} |_{t=0}=\lim \dfrac{I(t)-I(0)}{t} \geq \dfrac{I(1)-I(0)}{1}.
\end{equation*}

Thus, $I(0)>I(1)$. That said, by completely changing all $bp_{(X,Y_i)}$ from $Z_l$ to $Z_s$, the total of the impurity is obviously reduced. This contradicts to our assumption that the quantizer $Q^*$ is optimal. By contradiction method, the proof is complete.
\end{proof}

\begin{lemma}
\label{lemma: 1}
The optimal solution to the problem (\ref{eq: main problem}) is a
deterministic quantizer (hard clustering) i.e., $p_{Z_i|Y_j} \in \{0,1\}$, $\forall$ $i,j$.  
\end{lemma}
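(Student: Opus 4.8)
The plan is to rule out any fractional (split) assignment in an optimal quantizer by reusing, verbatim, the single‑input perturbation $I(t)$ already constructed in the proof of Theorem~\ref{theorem: 1}. Concretely, I would argue by contradiction: suppose an optimal $Q^{*}$ is not deterministic, so there is an input $Y_i$ with $q_i>0$ and two outputs $Z_l,Z_s$ with $p_{Z_l|Y_i}=b\in(0,1)$ and $p_{Z_s|Y_i}>0$. I take exactly the perturbation that moves the amount $tb\,p_{(X,Y_i)}$ from $Z_l$ to $Z_s$, so that $t=0$ is the current (optimal) allocation and $t=1$ empties $Y_i$ out of $Z_l$ entirely, strictly reducing the number of outputs across which $Y_i$ is split.

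The key structural fact is that Proposition~\ref{prop: 1} already shows $I(t)\ge(1-\tfrac{t}{a})I(0)+\tfrac{t}{a}I(a)$ for $0<t<a<1$, i.e.\ $I$ is concave on $[0,1]$, so its minimum is attained at an endpoint and it suffices to compare $I(0)$ with $I(1)$. First I would record that optimality of $Q^{*}$ forbids an improving move in either direction: both the move from $Z_l$ to $Z_s$ and the reverse move from $Z_s$ to $Z_l$ must be non‑improving, which by the derivative $\partial I/\partial t|_{t=0}=b\,(D(Y_i,Z_s)-D(Y_i,Z_l))$ computed inside the proof of Theorem~\ref{theorem: 1} forces $D(Y_i,Z_l)=D(Y_i,Z_s)$, and hence $\partial I/\partial t|_{t=0}=0$. (This two‑sided use is essential: a concave $I$ minimized at the left endpoint may still have $I(0)<I(1)$, so a one‑sided comparison alone would not conclude determinism.)

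I would then invoke strictness to finish. A concave $I$ with vanishing right derivative at $t=0$ satisfies $I(t)\le I(0)$, and if the inequality in Proposition~\ref{prop: 1} is \emph{strict} this sharpens to $I(1)<I(0)$, contradicting optimality. Strictness is supplied by the ``equality iff $a=b$'' clause of Definition~\ref{def: 1}: equality in the splitting bounds~(\ref{eq: 21})--(\ref{eq: 22}) requires $p_{X|Y_i}=p_{X|Z_l}=p_{X|Z_s}$, while when $g$ is strictly concave the terms~(\ref{eq: 24})--(\ref{eq: 25}) are already strict because $bq_i>0$. Thus removing the fractional assignment of $Y_i$ strictly lowers the objective, contradicting the optimality of $Q^{*}$; iterating over all fractionally assigned inputs then yields a deterministic optimum.

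The main obstacle is the degenerate case in which strict concavity genuinely fails, namely when $g$ is only weakly concave (e.g.\ linear) \emph{and} simultaneously $p_{X|Y_i}=p_{X|Z_l}=p_{X|Z_s}$. Here $I(t)$ is constant along the move, so the fractional and the deterministic allocations attain the same objective in~(\ref{eq: main problem}); the clean way to absorb this is to read the lemma as asserting the existence of a deterministic optimum and to push each such flat input to the endpoint $t=1$ without loss of value. I expect the only slightly delicate bookkeeping to be carrying out this push for all inputs at once so that a single pass produces a fully deterministic quantizer, which a short induction on the number of fractional assignments settles.
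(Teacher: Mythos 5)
Your proposal is correct and is built on exactly the machinery the paper itself uses: the perturbation $I(t)$ of (\ref{eq: I(t)}), its derivative formula $\partial I/\partial t|_{t=0} = b\,(D(Y_i,Z_s)-D(Y_i,Z_l))$, and the concavity estimate of Proposition \ref{prop: 1}. The difference is one of rigor, and it is worth recording, because the paper's own proof of Lemma \ref{lemma: 1} is a two-line remark --- ``$D(Y_i,Z_l)$ does not depend on $b$, so one should completely allocate $Y_i$ to the nearest $Z_l$'' --- which silently skips the only nontrivial case. Theorem \ref{theorem: 1}'s nearest-subset condition does not by itself exclude soft assignments: if $Y_i$ is genuinely split between $Z_l$ and $Z_s$, the condition applied in both directions forces the tie $D(Y_i,Z_l)=D(Y_i,Z_s)$, i.e.\ $\partial I/\partial t|_{t=0}=0$, and a tied soft assignment satisfies the theorem's condition just as well as a hard one. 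Your two-sided argument makes this explicit, and your observation that concavity plus a vanishing one-sided derivative gives $I(1)\le I(0)$ is precisely the missing step showing that full reassignment is never worse. Your strictness analysis is also sound: equality in (\ref{eq: 21})--(\ref{eq: 22}) forces $p_{X|Y_i}=p_{X|Z_l}=p_{X|Z_s}$ by the ``equality iff $a=b$'' clause of Definition \ref{def: 1}, and strict concavity of $g$ makes (\ref{eq: 24})--(\ref{eq: 25}) strict since $abq_i>0$. Finally, you are right that in the degenerate case (e.g.\ $g$ linear, as in the paper's own transmission-cost example, combined with identical posteriors) $I(t)$ is constant along the move, so soft optima genuinely exist and the lemma can only assert the \emph{existence} of a deterministic optimum; your endpoint-pushing induction delivers exactly that, and the paper's statement requires the same reading to be true. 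In short: same route as the paper, but your write-up proves what the paper only asserts.
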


\begin{proof}
Lemma \ref{lemma: 1} directly follows by the proof of Theorem \ref{theorem: 1}. Since the distance function $D(Y_i,Z_l)$ does not depend on the soft partition $p_{Z_l|Y_i}=b$, one should completely allocate $Y_i$ to a $Z_l$ such that $D(Y_i,Z_l)$ is minimized. 
\end{proof}

\subsection{Practical Algorithm}
Theorem \ref{theorem: 1} gave an optimality condition such that  the "distance" from a data $Y_i$ to its optimal partition $Z_l$ should be shortest. Therefore, a simple algorithm which is similar to a k-means algorithm can be applied to find the locally optimal solution. Our algorithm is proposed in Algorithm \ref{alg: 1}. We also note that the distance from $Y_i$ to $Z_l$ is defined by
\begin{eqnarray}
D(Y_i,Z_l) &=&\beta \sum_{q=1}^{N}[p_{(X_q,Y_i)} c_l^q] + d_l q_i \nonumber\\
			   &=& q_i [\beta \sum_{q=1}^{N}[p_{X_q|Y_i} c_l^q] + d_l].
\end{eqnarray}
Therefore, one can ignore the constant $q_i$ while comparing the distances between $D(Y_i,Z_l)$ and $D(Y_i,Z_s)$. 
 
\begin{footnotesize}
	\begin{algorithm}
		\caption{Finding the optimal partition under partitioned output constraint}
		\label{alg: 1}
		\begin{algorithmic}[1]
			\State{\textbf{Input}: $p_X$, $p_Y$, $p_{(X,Y)}$, $f(.)$, $g(.)$ and $\beta$.}
			\State{\textbf{Output}: $Z=\{Z_1,Z_2,\dots,Z_K$ \} }
			
			\State{\textbf{Initialization}: Randomly hard clustering $Y$ into $K$ clusters. }
			
			\State{\textbf{Step 1}: Updating $p_{(X,Z_l)}$ and $d_l$ for output subset $Z_l$ for $\forall$ $l \in \{1,2,\dots,K\}$:}			
		\begin{equation*}
		p_{(X,Z_l)}= \sum_{Y_q \in Z_l}^{}p_{(X,Y_q)},
		\end{equation*}
		\begin{equation*}
		c_l^i= \frac{\partial F(p_{(X,Z_l)})}{\partial p_{(X_i,Z_l)}}, \forall i \in \{1,2,\dots,N\}, 
		\end{equation*}		
		\begin{equation*}
		v_l =\sum_{Y_q \in Z_l}p_{Y_q},
		\end{equation*}			
		\begin{equation*}
		d_l= \frac{\partial g(v_l)}{\partial v_l}. 
		\end{equation*}
			\State{\textbf{Step 2}: Updating the membership by measurement the distance from each $Y_i \in Y$ to each subset $Z_j \in Z$}
\begin{equation}
\label{eq: nearest clustering}
Z_l=\{Y_i| D(Y_i,Z_l) \leq D(Y_i,Z_s), \forall s \neq l.
\end{equation}
			\State{\textbf{Step 3}: Go to Step 1 until the partitioned output $\{Z_1,Z_2,\dots,Z_K\}$ stop changing or the maximum number of iterations has been reached.}				
		\end{algorithmic}
	\end{algorithm}
\end{footnotesize}

The Algorithm \ref{alg: 1} works similarly to k-means algorithm and the distance from each point in $Y$ to each partition subset in $Z$ is updated in each loop. The complexity of this algorithm, therefore, is $O(TNKM)$ where $T$ is the number of iterations, $N$, $K$, $M$ are the size of the data dimensional, the output size and the data size.

\subsection{Hyperplane separation}
\label{subsection: 3-C}

	\begin{figure}
		\centering
		\includegraphics[width=2 in]{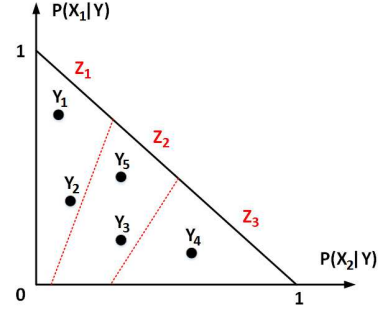}\\
		\caption{For $N=3$, $M=5$ and $K=3$, the optimal quantizer is equitvalent to the hyperplane cuts in 2 dimensional probability space.}\label{fig: 4}
	\end{figure}	

Similar to the work in \cite{burshtein1992minimum}, we show that the optimal partitions correspond to the regions separated by
hyper-plane cuts in the probability space of the posterior distribution. Consider the optimal quantizer $Q^*$ that produces a given partitioned output sets $Z=\{Z_1,Z_2,\dots,Z_K\}$ and a given conditional probability $p_{X|Z_l}=\{ p_{X_1|Z_l}, \dots, p_{X_N|Z_l}\}$ for $\forall$ $l=1,2,\dots,K$. From the optimality condition in Theorem \ref{theorem: 1}, we know that $\forall$ $Y_i \in Z_l$, then 
\begin{equation*}
\beta \sum_{q=1}^{N}[p_{X_q|Y_i} c_l^q] + d_l \leq \beta \sum_{q=1}^{N}[p_{X_q|Y_i} c_s^q] + d_s.
\end{equation*}
Thus, $ 0 \geq   \beta \sum_{q=1}^{N}p_{X_q|Y_i}[c_l^q-c_s^q] + d_l-d_s.$
By using $p_{X_N|Y_i}=1-\sum_{q=1}^{N-1} p_{X_q|Y_i}$, we have
\begin{small}
\begin{eqnarray}
d_s\!-\!d_l\!+\!\beta(c_s^N\!-\!c_l^N) & \! \geq \! &  \sum_{q=1}^{N-1} \beta p_{X_q|Y_i}[c_l^q \!-\! c_s^q \!-\! c_l^N+c_s^N].\nonumber \\ \label{eq: hyperplane}
\end{eqnarray}
\end{small}
For a given optimal quantizer $Q^*$, $c_l^q$ ,$c_s^q$, $d_l$, $d_s$ are scalars and $0 \leq p_{X_q|Y_i} \leq 1$, $\sum_{q=1}^{N}p_{X_q|Y_i}=1$. From (\ref{eq: hyperplane}),  $Y_i \in Z_l$ belongs to a region separated by a hyper-plane cut in probability space of posterior distribution $p_{X|Y_i}$. Similar to the result proposed in \cite{burshtein1992minimum}, existing a polynomial time  algorithm having time complexity of $O(M^{N})$ that can determine the globally optimal solution for the problem in (\ref{eq: main problem}). Fig. \ref{fig: 4} illustrates the hyper-plane cuts in two dimensional probability space for $N=3,M=5$ and $K=3$. 
\subsection{Application}
\label{sec: application}

As discussed in the previous part  that the Deterministic Information Bottleneck \cite{strouse2017deterministic} is a special case of our problem for the impurity function and the output constraints are entropy functions. 
We refer reader to \cite{strouse2017deterministic} for more detail of applications.  In this paper, we want to provide a simple example that using the results in Sec. \ref{subsection: 3-C} to find the globally optimal quantizer for a binary input communication channel quantization. Fig. \ref{fig: 5} illustrates our application. Output $Z$ is quantized from data $Y$. Next, $Z$ is mapped to $W$ by a mapping function $W=f(Z)$. Now, $W$ is the input for a limited rate channel $C$. Our goal is to design a good quantizer such that the mutual information $I(X;Z)$ has remained as much as possible while the rate of output $Z$ is under the limited rate $C$. We also note that a similar constraint, i.e., cost transmission, time delay can be replaced  to formulate other interesting problems. 
	\begin{figure}
		\centering
		\includegraphics[width=3.5 in]{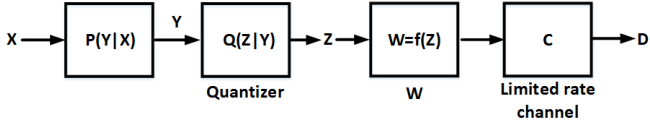}\\
		\caption{Partition with output contraints: an example with a relay channel having a limited capacity.}\label{fig: 5}
	\end{figure}

\textbf{Example 1:} To illustrate how the Algorithm \ref{alg: 1} work, we provide the following example. Consider a communication system which transmits input $X=\{X_1=-1,X_2=1\}$ having $p_{X_1}=0.2$, $p_{X_2}=0.8$ over an additive noise channel with i.i.d noise distribution $N(\mu=0,\sigma=1)$. The output signal  $Y$ is a continuous signal which is the result of input $X$ adding to the noise $N$.
\vspace{-0.05 in}
$$Y = X + N.$$
Due to the additive property, the conditional distribution of output $Y$ given input $X_1$ is $p_{Y|X_1=-1}=N(-1,1)$ while the conditional distribution of output $Y$ given input $X_2$ is $p_{Y|X_2=1}=N(1,1)$. We also note that due to the additive noise is continuous, $Y$ is in continuous domain.  
The continuous output $Y$ then is quantized to binary output $Z=\{Z_1=-1,Z_2=1\}$ using a quantizer $Q$. Quantized output $Z$ is transmitted over a limited rate channel $C$ with the highest rate $R=0.5$.  We have to find an optimal quantizer $Q^*$ such that the mutual information $I(X;Z)$ is maximized while $H(Z) \leq R$. 
Now, we first discrete $Y$ to $M=200$ pieces from $[-10,10]$ with the same width $\epsilon=0,1$. Thus, $Y=\{Y_1,Y_2,\dots,Y_{200}\}$ with the joint distribution $p_{(X_i,Y_j)}$, $i=1,2$, $j=1,2,\dots,200$ can be determined by using two given conditional distributions $p_{Y|X_1=-1}=N(-1,1)$ and $p_{Y|X_2=1}=N(1,1)$. 
Next, to find the optimal quantizer $Q^*$, we scan all the possible value of $\beta \geq 0$. For each value of $\beta$, we run the Algorithm \ref{alg: 1} many times to find the globally optimal quantizer. Finally, the largest mutual information $I(X;Z)^*$ is $0.18623$ which corresponds  to $H(Z)=0.48873$ at $\beta^*=6$.

\textbf{Using hyper-plane separation to find the globally optimal solution:} Using the result in Sec. \ref{subsection: 3-C}, the optimal quantizer (both local and global) is equivalent to a hyper-plane cut in probability space. Due to $|X|=N=2$, the hyper-plane is a scalar in posterior distribution $p_{X_2|Y}$. Noting that $p_{X_1|Y}$ is a strictly increasing function over $Y=[-10,10]$.  Thus, an exhausted searching of $y \in [-10,10]$ can be applied to find the optimal quantizer. Fig. \ref{fig: 6}  illustrates the function of $I(X;Z)$ and $H(Z)$ with variable $y \in [-10;10]$ using the resolution $\epsilon=0.1$. For $\beta=6$, the optimal mutual information $I(X;Z)^*=0.18623$ corresponds to  $H(Z)=0.48873$ that are achieved at $y=-1.1$. This result confirms the globally optimal solution using Algorithm \ref{alg: 1} in Example 1. 

	\begin{figure}
		\centering
		\includegraphics[width=2.8 in]{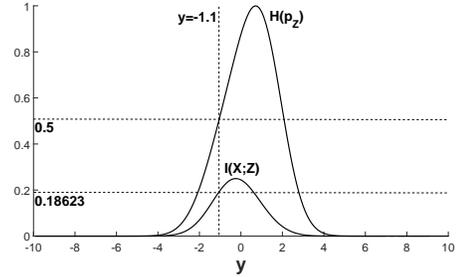}\\
		\caption{Finding the maximum of $I(X;Z)$ under the constraint $H(Z) \leq 0.5$.}\label{fig: 6}
	\end{figure}	

\section{Conclusion}
In this paper, we presented a new framework to minimize the  impurity partition while the probability distribution of partitioned output satisfies a concave constraint. Based on the optimality condition, we show that the optimal partition should be a hard partition. A low complexity algorithm is provided to find the locally optimal solution. Moreover, we show that the optimal partitions (local/global) correspond to the regions separated by hyper-plane cuts in the probability space of the posterior distribution. Therefore, existing a polynomial time complexity algorithm that can find the truly globally optimal solution. 

%

\bibliographystyle{unsrt}
\bibliography{sample}
\end{document}